\title{A lower bound for the determinantal complexity of a hypersurface}
\author[J.~Alper]{Jarod Alper}
\author[T.~Bogart]{Tristram Bogart}
\author[M.~Velasco]{Mauricio Velasco}
\thanks{During the preparation of this paper, the first author was partially supported by the Australian Research 
Council grant DE140101519. The second and third authors were partially supported by the FAPA funds from Universidad de los Andes.}
\date{May 8, 2015}
        \setlist[enumerate]{font=\normalfont}
        \theoremstyle{plain}
        \newtheorem{theorem}{Theorem}[section]
        \newtheorem{corollary}[theorem]{Corollary}
        \newtheorem{proposition}[theorem]{Proposition}
        \theoremstyle{definition}
        \newtheorem{definition}[theorem]{Definition}
        \newtheorem*{example*}{Example}
        \theoremstyle{remark}
        \newtheorem{remark}[theorem]{Remark}
        \newtheorem*{remark*}{Remark}  
\numberwithin{equation}{section}
\newcommand{\PP}{\mathbb{P}}
\DeclareMathOperator{\Sing}{Sing}
\newcommand{\Sym}{\operatorname{Sym}}
\newcommand{\perm}{\operatorname{perm}}
\newcommand{\codim}{\operatorname{codim}}
\newcommand{\rank}{\operatorname{rank}}
\newcommand{\co}{\colon}
\newcommand{\rX}{\mathrm{X}}
\renewcommand{\det}{\operatorname{det}}
\newcommand{\V}{{\rm V}}
\newcommand{\bV}{{\mathbf V}}
\DeclareMathOperator{\Jac}{Jac}
\DeclareMathOperator{\im}{im}
\DeclareMathOperator{\dc}{dc}
\DeclareMathOperator{\VP}{VP}
\DeclareMathOperator{\VNP}{VNP}
\begin{document}

\setcounter{tocdepth}{1}
\begin{abstract}
We prove that the determinantal complexity of a hypersurface of degree $d > 2$ is bounded below by one more than the codimension of the singular locus, provided that this codimension  is at least $5$.  As a result, we obtain that the determinantal complexity of the $3 \times 3$ permanent is $7$.  We also prove that for $n> 3$, there is no nonsingular hypersurface in $\PP^n$  of degree $d$ that has an expression as a determinant of a $d \times d$ matrix of linear forms while on the other hand for $n \le 3$, a general determinantal expression is nonsingular.  Finally, we answer a question of Ressayre by showing that the determinantal complexity of the unique (singular) cubic surface containing a single line is $5$.
\end{abstract}
\maketitle

\section{Introduction}

Let $k$ be a field. For a positive integer $m$, let $X=(x_{ij})_{1\leq i,j\leq m}$ denote an $m \times m$ matrix of linear forms.  Let $\det_m=\det(X)\in k[x_{ij}]$ be the determinant polynomial.

\begin{definition}  Let $f(x) \in k[x_1, \ldots, x_n]$ be a polynomial.   A {\it determinantal expression of size $m$ for $f$} is an affine linear map $L \co k^n \to k^{m \times m}$ such that $f(x) = \det_m(L(x))$. 
The {\it determinantal complexity of $f$}, denoted by $\dc(f)$, is the smallest $m$ such that there exists a determinantal expression of size $m$ for $f$.
\end{definition}

The main result of this paper is the following  lower bound for the determinantal complexity of a homogeneous polynomial $f \in k[x_1, \ldots, x_n]$.  We denote by $\Sing(f)$ the singular locus of the hypersurface $\V(f) \subseteq k^n$.

\begin{theorem} \label{T:main}
Let $k$ be a field.  Let $f \in k[x_1, \ldots, x_n]$ be a homogeneous polynomial of degree $d > 2$.  If $\codim(\Sing(f)) > 4$, then
$$\dc(f) \ge \codim(\Sing(f)) + 1.$$
\end{theorem}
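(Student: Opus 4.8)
The plan is to set $m = \dc(f)$, fix a determinantal expression $M(x) = A_0 + \sum_{i=1}^n x_i A_i$ with $f = \det(M(x))$, and prove the equivalent lower bound $\dim \Sing(f) \ge n - m + 1$. Since $f$ is homogeneous of degree $d \ge 2$, the origin lies in $\Sing(f)$, so $\Sing(f)$ is a nonempty cone and $\codim \Sing(f) \le n$; thus if $m \ge n+1$ the inequality is automatic, and I may assume $m \le n$.

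First I would exploit the corank-$\ge 2$ locus $Z = \{x \in k^n : \rank M(x) \le m-2\}$. For $x \in Z$ every $(m-1)\times(m-1)$ minor vanishes, so $\operatorname{adj}(M(x)) = 0$; since $\partial_j f = \operatorname{tr}(\operatorname{adj}(M) A_j)$ this forces $\nabla f(x) = 0$, while $f(x) = 0$ because $\rank M(x) < m$. Hence $Z \subseteq \Sing(f)$. On the other hand $Z$ is the preimage of the generic determinantal variety $\{\rank \le m-2\}$ of codimension $4$ in $k^{m\times m}$ under the affine-linear map $x \mapsto M(x)$, so each of its components has codimension at most $4$. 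Were $Z$ nonempty we would obtain $\codim \Sing(f) \le 4$, against the hypothesis; therefore $Z = \emptyset$, and $M(x)$ has corank exactly $1$ at every point of $Y := \{f=0\}$.

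Next I would reformulate the singular locus. At $x \in Y$ we have $\operatorname{adj}(M(x)) = c(x)\, v(x) u(x)^{T}$ with $c(x) \ne 0$, where $v(x)$ spans $\ker M(x)$ and $u(x)$ spans $\ker M(x)^{T}$, so $\partial_j f(x) = c(x)\, u(x)^{T} A_j v(x)$. Thus $x \in \Sing(f)$ exactly when $u(x)^{T} A_j v(x) = 0$ for all $j$; writing $B(v) = [\,A_1 v \mid \cdots \mid A_n v\,]$ for the $m \times n$ matrix, this says $u(x) \in \operatorname{leftker} B(v(x))$. Packaging the kernels, the adjugate gives a morphism $a \co Y \to \PP^{m-1}\times\PP^{m-1}$, $x \mapsto ([u(x)],[v(x)])$, landing in the Segre variety, with $\Sing(f) = a^{-1}(B)$ for the common zero locus $B$ of the $n$ bilinear forms $u^{T}A_j v$. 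A naive count bounds $\codim_{a(Y)} B$ only by $n$, which is far too weak: the content is that these $n$ conditions are highly dependent and cut $a(Y)$ in codimension at most $m-2$.

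Finally I would supply that dependence through homogeneity, which I expect to be the main obstacle. Comparing the coefficient of $t^{m}$ in $\det(A_0 + t\sum_i x_i A_i) = t^{d}\det(M(x))$ forces, when $m > d$, the identity $\det(\sum_i x_i A_i) \equiv 0$: the linear part $M_1(x) = \sum_i x_i A_i$ is a space of singular matrices. Analyzing the incidence variety $\{(x,[v]) : M_1(x) v = 0\}$, whose projection to $k^n$ is surjective, should yield a positive-dimensional locus of $[v]$ for which $B(v)$ drops rank, hence $\operatorname{leftker} B(v) \ne 0$; transporting this back to the kernels of the full matrix $M$ and doing the fiber-dimension bookkeeping on the kernel map $\kappa \co Y \to \PP^{m-1}$ should produce an $(n-m+1)$-dimensional family of singular points. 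The genuine difficulties are bridging the kernels of the homogeneous part $M_1$ to those of the affine matrix $M$, arranging the count so the estimate is the sharp $n-m+1$ rather than the factor-of-two weaker bound that using the full Segre variety would give, and treating the boundary case $m=d$ (where $f = \det M_1$ is an honest linear determinantal representation) by the classical structure theory. The hypotheses $d > 2$ and $\codim \Sing(f) > 4$ enter precisely to exclude the low-degree and low-codimension configurations — quadrics and the $2\times2$ hyperbolic determinant — in which the conclusion fails.
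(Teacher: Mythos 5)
Your first step is sound and coincides with the paper's Proposition~2.1: the corank-$\ge 2$ locus of $M$ maps into $\Sing(f)$, its components have codimension at most $4$, so under the hypothesis it is empty and $M(x)$ has corank exactly $1$ everywhere on $\V(f)$ (in particular $\rank M(0)=m-1$). Your reformulation of $\Sing(f)$ via the adjugate, as the locus where $u(x)^{T}A_j v(x)=0$ for all $j$, is also correct. But everything after that is a plan rather than a proof, and the plan does not close. The decisive claim --- that these $n$ bilinear conditions cut out a locus of dimension at least $n-m+1$ --- \emph{is} the theorem, and your proposed mechanism for it (the identity $\det(\sum_i x_iA_i)\equiv 0$ for $m>d$, the kernel incidence variety of the linear part $M_1$, and then ``transporting back'' to the affine matrix $M$) is carried only by ``should yield'' statements. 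You yourself flag the bridge from $\ker M_1$ to $\ker M$, the sharpness of the fiber-dimension count, and the case $m=d$ as ``genuine difficulties''; they are, and nothing in the sketch resolves them. Analyzing linear spaces of singular matrices through their kernel incidence varieties is delicate, and I see no reason the bookkeeping you describe lands on the sharp bound $n-m+1$ rather than something weaker. As written, the argument stops exactly where the theorem begins.

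The paper's route is more local and elementary, and it is worth seeing what it uses that your sketch does not. Normalize so that $J=M(0)$ is diagonal of rank $m-1$ with the single zero in position $(1,1)$, and write $M=J+Z$ with $Z$ a matrix of linear forms. Homogeneity of degree $d>2$ then forces \emph{two} identities, not just the top-degree one you extract: the degree-$1$ part of $\det(J+Z)$ gives $Z_{11}=0$, and the degree-$2$ part then gives $\sum_{j=2}^m Z_{1j}Z_{j1}=0$. Because $Z_{11}=0$, every summand of $\det(J+Z)$ contains at least two factors from the first row and column, so $f$ lies in the square of the ideal $I$ generated by those $2(m-1)$ linear forms, and the linear subspace $\V(I)$ is contained in $\Sing(f)$. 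The quadratic identity says that the span of these forms, viewed as a linear subspace of $k^{2(m-1)}$, lies on the nondegenerate quadric $\sum_j w_{1j}w_{j1}=0$ and hence has dimension at most $m-1$; therefore $\codim \V(I)\le m-1$ and $\codim\Sing(f)\le m-1$. The two inputs you would need to import into your framework are precisely these: the second-order consequence of homogeneity (not merely $\det M_1\equiv 0$) and the classical bound on linear subspaces contained in a quadric. Without them, or a substitute of equal strength, the proof is incomplete.
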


Our first application of this result is to give a new lower bound for the determinantal complexity of the permanent polynomial
$$\perm_n = \sum_{\sigma \in S_n} x_{1 \sigma(1)} x_{2 \sigma(2)} \cdots x_{n \sigma(n)}.$$
In \cite{valiant1} and \cite{valiant2}, Valiant conjectured that $\dc(\perm_n)$ is not bounded above by any polynomial in $n$ as long as ${\rm char}(k) \neq 2$.  Moreover, Valiant proposed an algebraic analogue of the P versus NP question by introducing the complexity classes $\VP_{\rm e}$ and VNP consisting of sequences of polynomials which are ``computable by arithmetic trees in polynomial time" and ``definable in polynomial time," respectively.
Valiant showed that if $\dc(\perm_n)$ grows faster than any polynomial, then in fact $\VP_{\rm e} \neq \VNP$.

In characteristic $0$, the best known bounds for the determinantal complexity of $\perm_n$ for $n >2$ are: 
\begin{equation} \label{E:perm}
n^2/2 \le \dc(\perm_n) \le 2^n - 1
\end{equation}
where the lower bound was established by Mignon and Ressayre  \cite{mignon-ressayre} (whose argument was subsequently generalized  in \cite{ccl} to characteristic $p \neq 2$ to provide the bound $(n-2)(n-3)/2 \le \dc(\perm_n)$) 
and the upper bound was established by Grenet \cite{grenet} (which holds in any characteristic). 

Since it is known that $\codim(\Sing(\perm_n)) > 4$ for $n>2$ (c.f. \cite[Lem.~2.3]{von-zur-gathen}) if ${\rm char}(k) \neq 2$, we obtain:

\begin{corollary} \label{C:perm}
Let $k$ be a field with ${\rm char}(k) \neq 2$.  If $n > 2$, then 
$$\dc(\perm_n) \ge \codim(\Sing(\perm_n)) + 1.$$
\end{corollary}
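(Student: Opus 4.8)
The plan is to derive Corollary~\ref{C:perm} directly from Theorem~\ref{T:main} by checking that the permanent polynomial $\perm_n$ satisfies all of the theorem's hypotheses whenever $n > 2$ and $\mathrm{char}(k) \neq 2$. The corollary's conclusion is exactly the conclusion of the theorem, so the entire argument reduces to verifying the three hypotheses: that $\perm_n$ is homogeneous of degree $d > 2$, and that $\codim(\Sing(\perm_n)) > 4$.

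First I would observe that $\perm_n$ is by its defining formula a homogeneous polynomial of degree exactly $n$ in the $n^2$ variables $x_{ij}$, since every monomial $x_{1\sigma(1)} x_{2\sigma(2)} \cdots x_{n\sigma(n)}$ has $n$ factors. Thus the degree hypothesis $d > 2$ of Theorem~\ref{T:main} translates into the assumption $n > 2$, which is precisely the hypothesis of the corollary. Next I would verify the codimension hypothesis: the statement $\codim(\Sing(\perm_n)) > 4$ for $n > 2$ in characteristic different from $2$ is not something I would prove from scratch, but rather invoke from the literature, namely \cite[Lem.~2.3]{von-zur-gathen}, exactly as the excerpt flags immediately before the corollary.

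With both hypotheses of Theorem~\ref{T:main} confirmed for $f = \perm_n$, the theorem applies and yields $\dc(\perm_n) \ge \codim(\Sing(\perm_n)) + 1$, which is the desired bound. The only genuine subtlety is that the hypotheses of the theorem and the hypotheses cited for the singular-locus codimension must be simultaneously compatible: the condition $\mathrm{char}(k) \neq 2$ is required precisely so that the cited lemma guarantees $\codim(\Sing(\perm_n)) > 4$, and this is why the characteristic assumption, absent from the general Theorem~\ref{T:main}, reappears in the corollary. I expect the proof to be essentially a one-line deduction once these citations are assembled, so the main (modest) obstacle is purely bookkeeping: ensuring that the cited codimension bound is stated for the correct range of $n$ and the correct hypothesis on the characteristic, and that no additional nondegeneracy of $\perm_n$ is silently needed beyond what Theorem~\ref{T:main} already assumes.
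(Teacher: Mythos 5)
Your proposal is correct and is exactly the argument the paper uses: the corollary follows immediately from Theorem~\ref{T:main} once one notes that $\perm_n$ is homogeneous of degree $n>2$ and cites \cite[Lem.~2.3]{von-zur-gathen} for $\codim(\Sing(\perm_n))>4$ when ${\rm char}(k)\neq 2$. Your remark about why the characteristic hypothesis reappears in the corollary is also the right bookkeeping point.
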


For $n=3$, the inequalities \eqref{E:perm} imply that $5 \le \dc(\perm_3) \le 7$.  It has been an open question to determine the actual value of $\dc(\perm_3)$.   Similarly, for $n=4$, the  inequalities \eqref{E:perm} imply that $8 \le \dc(\perm_4) \le 15$.  Since it can be readily computed that $\codim(\Sing(\perm_3)) = 6$ and  $\codim(\Sing(\perm_4)) = 8$ (c.f. \cite{von-zur-gathen}), we obtain:

\begin{corollary} \label{C:perm3}
Let $k$ be a field with ${\rm char}(k) \neq 2$. Then $\dc(\perm_3) = 7$ and $\dc(\perm_4) \ge 9$.
\end{corollary}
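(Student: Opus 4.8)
The plan is to prove Corollary~\ref{C:perm3} by combining the general lower bound of Theorem~\ref{T:main} with the existing upper bounds from \eqref{E:perm} (for $\perm_3$) and with the stated codimension computations of the singular loci of $\perm_3$ and $\perm_4$. The corollary is really two numerical assertions, so I would handle them in turn.

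For the claim $\dc(\perm_3) = 7$, I would argue by squeezing. First I would invoke Corollary~\ref{C:perm} (or equivalently apply Theorem~\ref{T:main} directly), using the hypothesis $\mathrm{char}(k) \neq 2$ and the fact, cited from \cite{von-zur-gathen}, that $\codim(\Sing(\perm_3)) = 6 > 4$. Since $\perm_3$ is homogeneous of degree $3 > 2$, Theorem~\ref{T:main} yields
$$\dc(\perm_3) \ge \codim(\Sing(\perm_3)) + 1 = 6 + 1 = 7.$$
Then I would match this against Grenet's upper bound from \eqref{E:perm}, namely $\dc(\perm_n) \le 2^n - 1$, which for $n = 3$ gives $\dc(\perm_3) \le 2^3 - 1 = 7$. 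The two inequalities together force $\dc(\perm_3) = 7$.

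For the claim $\dc(\perm_4) \ge 9$, I would again apply Theorem~\ref{T:main}. Using $\mathrm{char}(k) \neq 2$ and the computation $\codim(\Sing(\perm_4)) = 8 > 4$ from \cite{von-zur-gathen}, together with $\deg(\perm_4) = 4 > 2$, the theorem gives
$$\dc(\perm_4) \ge \codim(\Sing(\perm_4)) + 1 = 8 + 1 = 9,$$
which is precisely the asserted bound.

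Since both statements reduce to direct substitutions into Theorem~\ref{T:main} and the cited bounds, there is no genuine obstacle in this corollary itself; all the real content has been front-loaded into Theorem~\ref{T:main} and into the singular-locus computations. The only points requiring care are verifying that the hypotheses of Theorem~\ref{T:main} are met in each case---homogeneity, degree $d > 2$, and codimension strictly greater than $4$---and, for the exact value $\dc(\perm_3) = 7$, confirming that Grenet's upper bound is attained sharply at $n = 3$ so that the lower and upper bounds coincide.
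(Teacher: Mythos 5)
Your proposal matches the paper's (implicit) argument exactly: the lower bounds come from Theorem~\ref{T:main} together with the cited computations $\codim(\Sing(\perm_3)) = 6$ and $\codim(\Sing(\perm_4)) = 8$, and the exact value $\dc(\perm_3) = 7$ follows by combining with Grenet's upper bound $2^3 - 1 = 7$ from \eqref{E:perm}. This is correct and is precisely how the paper derives the corollary.
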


It was recently shown in \cite{hi} that if ${\rm char}(k) = 0$, the smallest determinantal expression for $\perm_3$ such that every entry is $0$, $1$ or a variable has size $7$.

\begin{remark}
Since any matrix where the first two columns are zero is a singular point of $\V(\perm_n)$, it follows that $\codim(\Sing(\perm_n)) \le 2n$.
 Therefore, even if the codimension of $\Sing(\perm_n)$ achieves the maximum value $2n$, Corollary \ref{C:perm} would only give the linear bound $\dc(\perm_n) \ge 2n + 1$.  
\end{remark}

Our second application is toward the determinantal complexity of homogeneous forms of low degree.  It is a classical problem to determine which homogeneous forms of degree $d$ have a determinantal expression of size $d$; see \cite{dickson} (or \cite{beauville} for a modern treatment).  
For this discussion, we assume that $k$ is algebraically closed.  
As any binary form $f(x,y)$ of degree $d$ factors into a product of $d$ linear forms, it is clear that $\dc(f) = d$.  It is also known that any (possibly singular) plane curve $f(x,y,z)$ of degree $d$ has $\dc(f) = d$ \cite[Rmk.~4.4]{beauville} and any quadratic form $f$ in $n \le 4$ variables has $\dc(f) = 2$.   It is a classical fact that a general cubic surface ($n=4,d=3$) admits a determinantal expression of size $3$ \cite{schroter}, \cite{cremona}. However, in all other cases ($n=4, d> 3$ or $n > 4$), a dimension count yields that a general homogeneous form $f(x_1, \ldots, x_n)$ of degree $d$ has $\dc(f) > d$ \cite[Thm.~1]{dickson}.  As a direct consequence of Theorem \ref{T:main}, we obtain the following bound for the determinantal complexity for nonsingular hypersurfaces in the case that $n > 4$ and $d \le n$:

\begin{corollary} \label{C:nonsingular}
Let $k$ be a field.   Assume that $n > 4$ and $d \le n$.  If $f(x_1, \ldots, x_n)$ is a non-degenerate homogeneous form (i.e. the hypersurface $\bV(f) \subseteq \PP^{n-1}$ is nonsingular) of degree $d>2$, then
$\dc(f) \ge n + 1$.  
\end{corollary}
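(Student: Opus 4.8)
The plan is to deduce the corollary directly from Theorem \ref{T:main} by computing $\codim(\Sing(f))$ exactly. Since $f$ is homogeneous, the affine hypersurface $\V(f) \subseteq k^n$ is the cone over the projective hypersurface $\bV(f) \subseteq \PP^{n-1}$, and the strategy is to show that the nonsingularity of $\bV(f)$ forces $\Sing(f)$ to consist of the cone point (the origin) alone, so that $\codim(\Sing(f)) = n$.

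First I would describe $\Sing(f)$ via the Jacobian criterion as the common vanishing locus $\V(f, \partial f/\partial x_1, \ldots, \partial f/\partial x_n) \subseteq k^n$. Because $f$ and all its partials are homogeneous, this locus is a cone (it is defined by a homogeneous ideal), and its projectivization in $\PP^{n-1}$ is precisely the singular locus $\Sing(\bV(f))$, which is empty by hypothesis. Hence $\Sing(f)$ contains no point other than possibly the origin. On the other hand, since $d > 2$, each partial $\partial f/\partial x_i$ is homogeneous of positive degree $d-1$ and therefore vanishes at the origin, so the origin does lie in $\Sing(f)$. Combining these two observations gives $\Sing(f) = \{0\}$, a single point, whence $\codim(\Sing(f)) = n$.

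With this computation in hand, the hypothesis $n > 4$ gives $\codim(\Sing(f)) = n > 4$, and $f$ is homogeneous of degree $d > 2$, so Theorem \ref{T:main} applies and yields $\dc(f) \ge \codim(\Sing(f)) + 1 = n + 1$, as claimed. I would note in passing that the hypothesis $d \le n$ is not actually used in deriving the inequality; it merely records that $n + 1 > d$, which is what makes the bound interesting, since it then rules out any $d \times d$ determinantal expression for a nonsingular hypersurface in this range.

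The only point requiring care is the identification of the projectivized singular cone with $\Sing(\bV(f))$, which I would justify through the projective Jacobian criterion: for $p \neq 0$, all partials vanish at $p$ if and only if $[p]$ is a singular point of $\bV(f)$. Over a field whose characteristic divides $d$ one must retain the equation $f = 0$ explicitly, since Euler's relation $d\, f = \sum_i x_i\, \partial f/\partial x_i$ no longer recovers it from the partials, but this does not change the conclusion. I do not anticipate any genuine obstacle here: the substance of the corollary is carried entirely by Theorem \ref{T:main}, and the remaining step is the standard comparison between the singularities of a cone and those of its projective base.
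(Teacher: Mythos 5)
Your proof is correct and is exactly the argument the paper intends: the paper presents this corollary as a direct consequence of Theorem \ref{T:main}, the point being that nonsingularity of $\bV(f)\subseteq\PP^{n-1}$ forces $\Sing(f)=\{0\}$ (the cone point), so $\codim(\Sing(f))=n>4$ and the theorem applies. Your observation that $d\le n$ is not needed for the inequality itself, but only to make the conclusion $\dc(f)\ge n+1>d$ interesting, also matches the paper's subsequent remark.
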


In particular, Corollary \ref{C:nonsingular} implies that for $n >4$ and $d \le n$, there does not exist a nonsingular hypersurface of degree $d$ in $\PP^{n-1}$ having a determinantal expression of size $d$.

Using Bertini's theorem, we can obtain the even stronger result:

\begin{theorem} \label{T:bertini}
Let $k$ be an algebraically closed field with ${\rm char}(k) = 0$. Let $L\co k^n\rightarrow k^{m\times m}$ be a linear map with $m\geq 2$. Let $f(x):=det_m(L(x))$.  Then $\codim(\Sing(f)) \le \min(4,n)$ and equality holds if $L$ is a general linear map.  

In particular, if we let $\bV(f) \subseteq \PP^{n-1}$ be the projective hypersurface defined by $f$, then the following statements hold:
\begin{enumerate}
\item if $n\leq 4$ then $\bV(f)$ is nonsingular for a general linear map $L$; and 
\item if $n>4$ then $\bV(f)$ is singular for every linear map $L$.
\end{enumerate}
\end{theorem}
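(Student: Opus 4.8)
The plan is to compare $\Sing(f)$ with the singular locus of the generic determinant. Write $L(x)=\sum_{i=1}^{n}x_iA_i$ with constant matrices $A_i\in k^{m\times m}$, and let $D=\V(\det_m)\subseteq k^{m\times m}$. I would start from the classical description $\Sing(D)=\{A:\rank(A)\le m-2\}$, a determinantal variety of codimension $(m-(m-2))^2=4$, together with the facts that the gradient $\nabla\det_m(A)=\operatorname{adj}(A)^{T}$ (the cofactor matrix) vanishes exactly on $\Sing(D)$ and has rank one when $\rank(A)=m-1$. Since $f=\det_m\circ L$ with $L$ linear, the chain rule gives $\partial_{x_j}f(x)=\operatorname{tr}\big(\operatorname{adj}(L(x))A_j\big)$, so that $x\in\Sing(f)$ if and only if $\operatorname{adj}(L(x))$ is orthogonal to $\im(L)$ for the trace pairing $\langle B,M\rangle=\operatorname{tr}(BM)$. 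This yields a clean dichotomy: if $\rank(L(x))\le m-2$ then $\operatorname{adj}(L(x))=0$ and automatically $x\in\Sing(f)$, so $L^{-1}(\Sing(D))\subseteq\Sing(f)$; whereas if $\rank(L(x))=m-1$, then $x\in\Sing(f)$ precisely when $\im(L)$ is contained in the tangent hyperplane $T_{L(x)}D$, i.e. when the linear space $\im(L)$ is tangent to $D$ at $L(x)$.

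For the inequality $\codim(\Sing(f))\le\min(4,n)$, valid for every $L$, I would use only the inclusion $L^{-1}(\Sing(D))\subseteq\Sing(f)$. Realizing $L^{-1}(\Sing(D))$ as the image under the projection $k^n\times k^{m\times m}\to k^n$ of the intersection of the graph of $L$ with $k^n\times\Sing(D)$, and noting that this preimage is nonempty because the zero matrix lies in $\Sing(D)$ (here $m\ge 2$), one sees that a linear preimage cannot raise codimension, so $\codim L^{-1}(\Sing(D))\le 4$ and hence $\codim\Sing(f)\le 4$. Combined with the trivial bound $\codim\Sing(f)\le n$ this gives $\codim\Sing(f)\le\min(4,n)$. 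When $n>4$ it forces $\dim\Sing(f)\ge n-4\ge 1$, so $\Sing(f)$ strictly contains the origin and $\bV(f)$ is singular for \emph{every} $L$; this is part (2).

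For equality when $L$ is general I would argue geometrically. Suppose first $n\le m^2$, so a general $L$ is injective and $\Lambda:=\PP(\im L)\subseteq\PP^{m^2-1}$ is a general linear subspace of dimension $n-1$; then $\bV(f)=\Lambda\cap\PP(D)$, and by the dichotomy above its singular locus is the union of $\Lambda\cap\PP(\Sing(D))$ with the locus where $\Lambda$ is tangent to the smooth hypersurface $\PP(D)\setminus\PP(\Sing(D))$. Here Bertini's theorem enters: in characteristic $0$, a general linear subspace meets a fixed smooth quasi-projective variety transversally (Kleiman transversality), so for general $L$ the tangency locus is empty. The remaining piece $\Lambda\cap\PP(\Sing(D))$ is the intersection of a general linear subspace of codimension $m^2-n$ with the $(m^2-5)$-dimensional variety $\PP(\Sing(D))$, hence for general $\Lambda$ has dimension $n-5$ when $n\ge5$ and is empty when $n\le4$. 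Passing to affine cones gives $\codim\Sing(f)=4$ when $n\ge5$ and $\Sing(f)=\{0\}$ (so $\bV(f)$ nonsingular) when $n\le4$; in both cases $\codim\Sing(f)=\min(4,n)$, proving equality and part (1). The case $n>m^2$ is easier: a general $L$ is surjective, so $\im(L)=k^{m\times m}$ cannot lie in any tangent hyperplane $T_{L(x)}D$, the tangency locus is automatically empty, and $\Sing(f)=L^{-1}(\Sing(D))$ has codimension $4=\min(4,n)$. The degenerate case $m=2$ (where $\Sing(D)=\{0\}$ and $\PP(\Sing(D))=\emptyset$) is immediate.

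I expect the main obstacle to be the equality step, specifically showing that the tangency locus is empty for general $L$: this requires invoking Bertini/Kleiman transversality for a general linear subspace of arbitrary codimension $m^2-n$ rather than a single hyperplane, and characteristic $0$ is essential here. One must also keep the two sources of singular points genuinely separate — tangency is only claimed along the smooth part $\PP(D)\setminus\PP(\Sing(D))$ — and verify the dimension count for the general linear section of $\PP(\Sing(D))$, while treating the boundary cases $m=2$ and $n>m^2$ by hand as indicated above.
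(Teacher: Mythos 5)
Your proof is correct and follows essentially the same route as the paper: both realize $\bV(f)$ as a linear section of the determinantal hypersurface, derive the upper bound from $\codim(\Sing(\det_m))=4$ together with the fact that linear sections/preimages cannot increase the codimension of the singular locus, and obtain equality for general $L$ from Bertini's theorem in characteristic $0$. Your explicit adjugate/tangency dichotomy and separate treatment of surjective $L$ are just more detailed versions of what the paper packages into its cone reduction and its iterated application of Bertini.
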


We now consider how the vector space $\Sym^d (k^n)^{\vee}$ of homogeneous forms in $n$ variables of degree $d$ decomposes by the determinantal complexity.   For this discussion, we assume that $k$ is algebraically closed of characteristic $0$.   For  $n=2$ or $3$, this decomposition of $\Sym^d (k^n)^{\vee}$ is trivial as all non-zero forms have determinantal complexity $d$.  The first interesting case is quadratic forms.  For a quadratic form $f$ in $n > 4$ variables of rank $r$, the determinantal complexity of $f$ is $\lceil (r+1)/2 \rceil$ for $r \ge 4$ and $2$ otherwise \cite[Thm.~1.4]{mignon-ressayre}.  Thus, the decomposition of $\Sym^2 (k^2)^{\vee}$ by the determinantal complexity is the same as the stratification by the above function of the rank $r$.

The next interesting case to consider is cubic surfaces.  It is a classical fact that any nonsingular cubic surface has a determinantal expression of size $3$; see \cite{grassman} and \cite{beauville}.  More generally, it was shown in \cite{brundu-logar} that any cubic surface not projectively equivalent to $f = xy^2+yt^2+z^3$ has determinantal complexity $3$ and moreover that $\dc(f) > 3$.  
This form $f$ is also the unique cubic surface (up to projective equivalence) in $\PP^3$ containing a single line.  While it is possible to write down a determinantal expression of size $5$, it has been an open question to determine whether $\dc(f)$ is $4$ or $5$.
Using a similar idea to the proof of Theorem \ref{T:main}, we establish:

\begin{theorem} \label{T:cubic} Let $k$ be a field with ${\rm char}(k) = 0$. Then $\dc(xy^2+yt^2+z^3) = 5$.
\end{theorem}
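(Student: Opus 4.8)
The plan is to establish the two bounds $\dc(f)\le 5$ and $\dc(f)\ge 5$ separately, where $f=xy^2+yt^2+z^3$. For the upper bound I would exhibit an explicit $5\times 5$ matrix $L(x)$ of linear forms with $\det_5 L(x)=f$; such an expression is known (one splices a $3\times 3$ companion-type block computing $z^3$ together with entries producing $xy^2+yt^2$), and checking $\det_5 L(x)=f$ is a direct computation. For the lower bound, since \cite{brundu-logar} already gives $\dc(f)>3$, it suffices to exclude a determinantal expression of size $4$. So I would assume for contradiction that $f=\det_4(L(x))$ for an affine linear map $L\co k^4\to k^{4\times 4}$ and work toward a contradiction.

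First I would record the relevant geometry. In characteristic $0$ a direct computation of the partials of $f$ shows that $\Sing(f)$ is the single line $\ell=\{y=z=t=0\}\subseteq k^4$, so $\codim(\Sing(f))=3$. This is exactly why Theorem~\ref{T:main} cannot be invoked: its hypothesis $\codim(\Sing(f))>4$ fails. Moreover the singularity is very degenerate: in the chart $x=1$ the local equation $y^2+yt^2+z^3$ has multiplicity $2$, with tangent cone the double plane $\{y^2=0\}$.

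Next I would exploit the degree constraint, precisely as in the proof of Theorem~\ref{T:main}. Writing $L(x)=A_0+M(x)$ with $M(x)=xA_1+yA_2+zA_3+tA_4$ the homogeneous linear part, and expanding $\det_4(A_0+M(x))$ by multilinearity in the columns, matching degrees against the homogeneous cubic $f$ forces the parts of degrees $0,1,2,4$ to vanish; in particular $\det A_0=0$ and $\det_4 M(x)\equiv 0$. Homogenizing, the $4\times 4$ matrix $N(x_0,x,y,z,t)=x_0A_0+M(x)$ of linear forms then satisfies $\det_4 N=x_0\,f$, a size-$4$ (hence minimal, as $\deg(x_0f)=4$) homogeneous determinantal representation of the reducible quartic $x_0f$. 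The idea borrowed from Theorem~\ref{T:main} is the containment $\{\,\corank L(x)\ge 2\,\}\subseteq \Sing(f)=\ell$, valid because at a corank-$\ge 2$ matrix every $3\times 3$ minor, and hence every partial of $\det_4$, vanishes. Thus off the line $\ell$ the matrix $L(x)$ has corank exactly $1$, and all degeneration of the representation is concentrated along $\ell$.

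The crux, and the step I expect to be the main obstacle, is to extract a contradiction from the local structure of the representation along $\ell$. After normalizing $A_0$ by the $\mathrm{GL}_4(k)\times\mathrm{GL}_4(k)$ action to $A_0=\operatorname{diag}(I_r,0)$ with $r=\rank A_0\le 3$, I would split on the generic corank of $L$ on $\ell$. If $\corank L\equiv 1$ along $\ell$, then at a point of $\ell$ the kernel vector $v$ and cokernel vector $w$ are defined, the singularity condition reads $w^\top A_k v=0$ for all $k$, and the tangent cone of $f$ there is the quadric $Q(\xi)=-c(\xi)^\top\operatorname{adj}(A')b(\xi)$ arising from the corank-$1$ expansion of the determinant; forcing this quadric, together with the next-order terms, to reproduce the rank-one tangent cone $y^2$ and the cubic term $z^3$ of our specific singularity is where a careful and delicate calculation is required. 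If instead $\corank L$ jumps to $2$ somewhere on $\ell$, I would analyze the pencil $A_0+xA_1$ together with the transverse directions and rule out $\det_4 N=x_0 f$. In both cases the obstruction is that the unique line on $\bV(f)$ and its double-plane tangent cone are too degenerate to come from a minimal size-$4$ determinantal structure; making this precise, rather than the surrounding bookkeeping, is the real work.
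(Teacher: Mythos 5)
Your setup matches the paper's up to the point where the real work begins: the explicit $5\times 5$ expression for the upper bound, the reduction via \cite{brundu-logar} to excluding size $4$, the normalization $L=J+Z$ with $J=L(0)$ of rank $r\le 3$, the vanishing of the degree $0,1,2,4$ parts of the determinant, and the containment of the corank-$\ge 2$ locus in $\Sing(f)$. But you then explicitly defer the crux (``a careful and delicate calculation is required,'' ``making this precise \dots is the real work''), so what you have is a plan, not a proof. The paper closes the gap by a case analysis on $r=\rank(L(0))\in\{1,2,3\}$, and each case is killed by a concrete argument that your proposed local analysis along $\ell$ does not supply. For $r=1$ the degree-$3$ part of $\det(J+Z)$ is itself a $3\times 3$ determinant of linear forms, so $f$ would have a size-$3$ expression, contradicting $\dc(f)>3$ --- no tangent-cone analysis is involved. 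For $r=2$ the vanishing of the degree-$2$ part gives $Z_{11}Z_{22}-Z_{12}Z_{21}=0$; one reduces to $Z_{11}=Z_{21}=0$ and then uses \emph{twice} that $y=z=0$ is the unique line contained in the cubic surface to force $Z_{31},Z_{41}$ and then $Z_{12},Z_{22}$ to span $\langle y,z\rangle$, whence $f\in(y,z)^2$, which is false. For $r=3$ the argument of Theorem~\ref{T:main} shows the first row and column of $Z$ span exactly $\langle y,z,t\rangle$ (since $\V(f)$ is singular precisely along $y=z=t=0$); the quadric relation $\sum_j Z_{1j}Z_{j1}=0$ then forces an antisymmetric shape $Z_{12}=\alpha t+\beta y$, $Z_{13}=-\beta z+\gamma t$, $Z_{14}=-\gamma y-\alpha z$, and comparing coefficients of the six cubic monomials divisible by $x$ yields an inconsistent linear system except when $\alpha=0$, $\gamma\neq 0$, a case excluded because the top row of $L$ would vanish on a plane on which $f$ does not.

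Two further cautions. First, you are implicitly conflating two different lines: the singular locus of the affine hypersurface is $\{y=z=t=0\}$, while the unique line on the projective surface $\bV(f)\subseteq\PP^3$ is $\{y=z=0\}$; the paper's $r=2$ and $r=3$ cases use both, in different ways, and the distinction matters. Second, your proposed route through the corank stratification of $L$ along $\ell$ and the tangent cone of the singularity is not obviously capable of detecting the $r=1$ and $r=2$ obstructions, which come from the global inputs $\dc(f)>3$ and the uniqueness of the line on $\bV(f)$ rather than from the local analytic type of the singular point. As written, the proposal identifies the correct difficulty but does not resolve it.
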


\begin{remark}
It is worthwhile to include the following well-known observation:  since $\dc(xy^2+yt^2+z^3) > 3$, one sees that the determinantal complexity function $f \mapsto \dc(f)$ is not in general upper semicontinuous (i.e. the locus of forms $f$ with $\dc(f) \ge m$ for a fixed $m$ is not necessarily Zariski-closed).  Indeed,  the cubic surface $xy^2+yt^2+z^3$ degenerates to singular cubic surfaces (e.g.  $z^3 = \lim_{\epsilon \to 0} \epsilon xy^2+ \epsilon yt^2+z^3$) with determinantal complexity $3$.  Nevertheless, for each $m$, the locus of forms $f$ with $\dc(f)=m$ is constructible.
\end{remark}

\subsection*{Acknowledgements}  We thank Nicolas Ressayre for raising the question regarding the value of $\dc(xy^2+yt^2+z^3) $ during the problem discussion at the {\it Geometric Complexity Theory} workshop at the Simons Institute in Berkeley in September, 2014.

\section{Proofs}

We begin with an easy generalization of \cite[Thm.~3.1]{von-zur-gathen}.

\begin{proposition}  \label{P:rank} 
Let $k$ be any field.  Let $f \in k[x_1, \ldots, x_n]$ be a homogeneous polynomial $f \in k[x_1, \ldots, x_n]$ satisfying $\codim(\Sing(f)) > 4$.  If $L \co k^n \to k^{m \times m}$ is a determinantal expression for $f$, then $\im(L) \cap \Sing(\det_m) = \emptyset$.  In particular, $\rank(L(0)) = m-1$.
\end{proposition}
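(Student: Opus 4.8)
The plan is to show that the preimage $L^{-1}(\Sing(\det_m))$ is contained in $\Sing(f)$ and to bound its codimension, so that a nonempty intersection $\im(L)\cap\Sing(\det_m)$ would force $\codim(\Sing(f))\le 4$, contradicting the hypothesis. Since $\codim(\Sing(f))$ is a geometric invariant and it suffices to rule out $\bar{k}$-points of the intersection, I would extend scalars and work over $\bar{k}$ throughout. The starting point is the classical description of the singular locus of the determinant: writing $C(A)$ for the matrix of cofactors of $A$, so that $C(A)=\nabla\det_m(A)$ and $\operatorname{adj}(A)=C(A)^{T}$, one has $C(A)=0$ exactly when every $(m-1)\times(m-1)$ minor of $A$ vanishes, i.e. when $\rank(A)\le m-2$. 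Hence $\Sing(\det_m)=\{A:\rank(A)\le m-2\}$ is the determinantal variety of matrices of corank at least two, which is irreducible of codimension $(m-(m-2))^2=4$ in $k^{m\times m}$; as minors are polynomials, this holds over any field and in any characteristic.

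Next I would establish the inclusion $L^{-1}(\Sing(\det_m))\subseteq\Sing(f)$. Writing $L(x)=M_0+\sum_{i=1}^{n}x_iM_i$ with $M_i\in k^{m\times m}$, the chain rule gives $\partial f/\partial x_i(x)=\langle C(L(x)),M_i\rangle$, the entrywise pairing of the cofactor matrix of $L(x)$ with $M_i$. If $x\in L^{-1}(\Sing(\det_m))$, then $C(L(x))=0$ by the previous step, so every partial derivative of $f$ vanishes at $x$; together with $f(x)=\det_m(L(x))=0$, this shows $x\in\Sing(f)$.

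I would then run the dimension count. Suppose $\im(L)\cap\Sing(\det_m)\neq\emptyset$, so that $L^{-1}(\Sing(\det_m))$ is nonempty. Via the graph $\Gamma=\{(x,L(x))\}\subseteq k^n\times k^{m\times m}$, which is an affine-linear subspace cut out by $m^2$ linear equations, the first projection identifies $L^{-1}(\Sing(\det_m))$ with $\Gamma\cap\bigl(k^n\times\Sing(\det_m)\bigr)$. By the affine dimension theorem, intersecting the codimension-$4$ variety $k^n\times\Sing(\det_m)$ with the linear subspace $\Gamma$ drops the dimension by at most $4$, so $\codim L^{-1}(\Sing(\det_m))\le 4$. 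Combined with the inclusion above, this yields $\codim(\Sing(f))\le 4$, a contradiction; hence $\im(L)\cap\Sing(\det_m)=\emptyset$. For the final assertion, homogeneity of $f$ gives $\det_m(L(0))=f(0)=0$, so $\rank(L(0))\le m-1$, while $L(0)\in\im(L)$ avoids $\Sing(\det_m)=\{\rank\le m-2\}$, forcing $\rank(L(0))=m-1$.

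The step I expect to be most delicate is the codimension bound on the preimage. Because $\Sing(\det_m)$ is not a complete intersection, being cut out by all the $(m-1)\times(m-1)$ minors rather than by four equations, one cannot simply pull back four defining equations under $L$. The graph construction reduces the problem to intersecting a variety with a linear subspace, where the affine dimension theorem applies cleanly, provided one keeps the entire argument over $\bar{k}$.
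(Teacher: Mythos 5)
Your proof is correct and follows essentially the same route as the paper: the chain rule gives $L^{-1}(\Sing(\det_m)) \subseteq \Sing(f)$, and the bound $\codim(L^{-1}(\Sing(\det_m))) \le \codim(\Sing(\det_m)) = 4$ yields the contradiction, with the rank claim following from $f(0)=0$. The only difference is that where the paper cites the standard inverse-image codimension bound from Harris, you prove it directly via the graph construction and the affine dimension theorem, which is a reasonable expansion of the same step.
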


\begin{proof}  Let $y_{kl}$ for $1 \le k,l \le m$ be coordinates on $k^{m \times m}$ and write $L = (L_{kl})$ where each $L_{kl}$ is an affine linear form in $x_1, \ldots, x_n$.  By the chain rule, 
$$\frac{\partial f}{\partial x_i} (x) = \sum_{1 \le k,l \le m} \frac{\partial \det_m}{\partial y_{kl}} (L(x))  \frac{\partial L}{\partial x_i } (x)$$
and it follows that $L^{-1}(\Sing(\det_m)) \subseteq \Sing(f)$.  If $L^{-1} (\Sing(\det_m)) \neq \emptyset$, then 
$$\codim(\Sing(f)) \le \codim(L^{-1}(\Sing(\det_m))) \le \codim(\Sing(\det_m))=4$$
where the second  inequality is the standard bound for the codimension of an inverse image (c.f. \cite[Thm.~17.24]{harris}) and the last equality follows from the fact that the singular locus of $\det_m$ consists of matrices $A$ with $\rank(A) \le m-2$.  The above inequalities contradict our hypothesis that $\codim(\Sing(f)) > 4$.  

For the final statement, we know that $\rank(L(0)) \ge m-1$.  But since $f$ is homogeneous, we have $\det(L(0)) = f(0) = 0$ which implies that $\rank(L(0)) = m-1$.
\end{proof}

\begin{remark} Proposition \ref{P:rank} is true more generally (with the same proof) for any morphism $L \co k^n \to k^{m \times m}$ of varieties such that $f(x) = \det_m(L(x))$.
\end{remark}

\begin{proof}[Proof of Theorem \ref{T:main}] Let $L \co k^{n}\rightarrow k^{m\times m}$ be a determinantal expression for $f$.   Proposition \ref{P:rank} implies that $\rank (L(0)) = n-1$.
By multiplying $L$ by matrices on the left and right, we may assume that $J= L(0)$ is the $m \times m$ matrix $(J_{ij})$ with $J_{ii}=1$ for $2\leq i \leq m$ and $J_{ij}=0$ otherwise.  Therefore, $L(x) = J + Z(x)$ where $Z=(Z_{ij})$ is an $m\times m$ matrix where each $Z_{ij}$ is a linear form in $x=(x_{1}, \ldots, x_n)$. Since $f(x) = \det_m(J+Z(x))$ and the left hand side is homogeneous of degree $d > 2$, we conclude that the equations $Z_{11}=0$ and 
$\sum_{j=2}^m Z_{1j}Z_{j1}=0$ hold.

Let $I\subseteq k[x_1, \ldots, x_n]$ be the ideal generated by the first row and column of $Z$. Since $Z_{11}=0$, every summand of $\det(J+Z(x))$ is divisible by a product of two elements of $I$. In particular, $f \in I^2$ and all partial derivatives $\partial f/\partial x_i$ are in  $I$.   Thus, $\V(I)\subseteq \Sing(f)$. 

We now obtain an upper bound on the codimension of $\V(I)$ as follows.  We introduce the linear map
$$\begin{aligned}
G \co k^{n} & \to  k^{2(m-1)} \\
(x_1, \ldots, x_n) & \mapsto \left(Z_{12},Z_{13},\ldots, Z_{1m}, Z_{21}, Z_{31}, \ldots, Z_{m1}\right).
\end{aligned}$$
Since $\V(I)=\ker(G)$, we have that $\codim(\V(I)) = \dim(\im(G))$.  Now  $\im(G)$ is a linear subspace which is entirely contained in the non-degenerate quadric $\sum_{j=2}^m w_{1j}w_{j1}=0$ in $k^{2(n-1)}$ where $w_{12}, \ldots, w_{1m}, w_{21}, \ldots, w_{m1}$ are the coordinates on $k^{2(n-1)}$. Such subspaces have dimension at most $m-1$ (c.f.~\cite[pg.~289]{harris})\footnote{If ${\rm char}(k) = 2$, one checks that the argument of \cite[pg.~289]{harris} applies to  $Q(w)=\sum_{j=2}^m w_{1j}w_{j1}$ using the bilinear form $Q_0(w, w') = Q(w+w') - Q(w) - Q(w')$.}.  We conclude that 
$$\codim(\Sing(f)) \le \codim(\V(I)) \le m-1$$
so that $m \ge \codim(\Sing(f)) + 1$ as claimed.
\end{proof}

\begin{proof}[Proof of Theorem \ref{T:bertini}]
We begin by making the following observation.  If $L\co k^n\rightarrow k^{m\times m}$ is not injective then $\V(f) \subseteq k^n$ is a cone over the hypersurface $\V(f') \subseteq k^s$ where $f'$ is the determinant of the linear map $k^{s} = k^n / \ker(L) \to k^{m \times m}$ (i.e. there is a choice of basis $x_1, \ldots, x_s, x_{s+1}, \ldots, x_n$ for $\PP^{n-1}$ such that the hypersurface defined by $L$ does not involve the variables $x_{s+1}, \ldots, x_{n}$).  It follows that the codimension of $\Sing(f)$ in $k^n$ is equal to the codimension of $\Sing(f')$ in $k^{s}$.  

We first show that $\codim(\Sing(f)) \le 4$.  By the above observation, we may assume that $L$ is injective and that $n\leq m^2$.  Therefore, $\V(f)$ is the intersection of the determinantal hypersurface $\V(\det_m) \subseteq k^{m \times m}$ with $m^2-n$ hyperplanes. 
  As the intersection of a variety with a singular point $p$ with a hypersurface passing through $p$ also has a singularity at $p$, the codimension of the singular locus can only decrease after intersecting.  As $\codim( \Sing(\det_m)) = 4$, we see that $\codim( \Sing(f)) \le 4$.

  We now show that $\codim(\Sing(f)) = \min(4,n)$ for general linear maps $L$.  By the above observation, we may assume that $L$ is injective and $n \le m^2$. Then the projective hypersurface $\bV(f) \subseteq \PP^{n-1}$ is the intersection of $\bV(\det_m)\subseteq \PP^{m^2 -1}$ with $m^2-n$ general hyperplanes.  
We recall Bertini's Theorem (c.f. \cite[Thm.~17.16]{harris}):
if $X\subseteq \PP^N$ is any variety over $k$ and $H$ is a general hyperplane, then
 $\Sing( H \cap X) = H \cap \Sing(X)$, and moreover $\dim(\Sing(H \cap X)) = \dim(\Sing(X)) - 1$ provided that $\dim(\Sing(X)) > 0$.

We apply Bertini's Theorem $m^2-n$ times. Since the singular locus of $\bV(\det_m)$ in $\PP^{m^2-1}$ has dimension $m^2-5$,  we can conclude that the singular locus of a hypersurface $\bV(f) \subseteq \PP^{n-1}$ defined by a general linear map $L$ has dimension $m^2-5-(m^2-n)=n-5$ if $n \ge 5$ and is empty if $n < 5$, which was our intended goal.

For the final statements, if $n \le 4$ (resp. $n > 4$), then we have shown that $\codim (\Sing(f)) =  n$ for general linear maps $L$ (resp. $\codim( \Sing(f)) \le 4$ for every linear map $L$) which implies that $\bV(f) \subseteq \PP^{n-1}$ is nonsingular for general $L$ (resp. singular for every $L$).
\end{proof}

\begin{proof}[Proof of Theorem \ref{T:cubic}]
By \cite[Prop.~4.3]{brundu-logar}, we know that $\dc(f) > 3$.  If $\dc(f) = 4$, then we can assume that there is a determinantal expression  $L = J +Z \co k^4 \to k^{4 \times 4}$, where $Z=(Z_{ij})$ is a matrix of linear forms and $J = (J_{ij})$ is the matrix of rank $r$ with $J_{ii} = 1$ for $5-r \le i \le 4$ and $0$ otherwise.  We will show that each possibility for the rank $r$ yields a contradiction.  First, the rank $r$ cannot be $0$ or $4$ as $f$ is homogeneous of degree $3$.   If $r=1$, the degree $3$ component of $\det(L)$, namely $\det (Z_{ij})_{1 \le i,j, \le 3}$, gives a determinantal expression of $f$ of size $3$, contradicting the fact that $\dc(f) > 3$. 

If $r=2$, then $Z_{11}Z_{22} - Z_{12}Z_{21} = 0$.  We will argue that we can reduce to the case that $Z_{11}=Z_{21} = 0$. Indeed, if $Z_{11}=0$, then either $Z_{21}=0$ or $Z_{12}=0$, and in the latter case we replace $L$ with its transpose.  If $Z_{11} \neq 0$, then either $Z_{12}$ or $Z_{21}$ is a multiple of $Z_{11}$, and after potentially replacing $L$ with its transpose, we may assume that $Z_{12} \in \langle Z_{11} \rangle$.  We may replace $L$ by $P^{-1} L P$ where $P$ is an invertible matrix of constants, so that $Z_{12}=0$.  But then $Z_{22}=0$ and the claim is established by interchanging the first and second column (and negating one column).  Since $y=z=0$ is the unique line contained in this cubic surface, we can replace $L$ with $PLP^{-1}$ so that $Z_{31}=y$ and $Z_{41}=z$.  This yields
$$\begin{aligned} f &= \det 
\begin{pmatrix} 
	0 		& Z_{12} & Z_{13} 		& Z_{14} \\
	0	 	& Z_{22} & Z_{23} 		& Z_{24} \\
	y		& Z_{32} & 1 + Z_{33}	& Z_{34} \\
	z		& Z_{42} & Z_{43}		& 1 + Z_{44} 
\end{pmatrix} \\
	& = y(Z_{12}Z_{23}  - Z_{13} Z_{22}  ) - z ( Z_{14}Z_{22}  - Z_{12} Z_{24}).
\end{aligned}
$$
Since $f = 0$ along the subspace $Z_{12} = Z_{22} =0$, we can use the fact again that $y=z=0$ is the unique line in this cubic surface to replace $L$ with $P^{-1} L P$ so that $Z_{12} = y$ and $Z_{22} = z$.  But then $f \in (y,z)^2$, a contradiction.

Finally, suppose $r=3$.  By the argument in the proof of Theorem \ref{T:main}, we know that $Z_{11} = 0$ and $Z_{12} Z_{21} + Z_{13} Z_{31} + Z_{14} Z_{41} = 0$.  Moreover, we know that the dimension of the subspace $I_1 := \langle Z_{12},Z_{13},Z_{14}, Z_{21}, Z_{31}, Z_{41} \rangle$ is at most $3$ and that if $I$ denotes the ideal generated by $I_1$, then $\Jac(f) := (f_x,f_y,f_z,f_t) \subseteq I$.  The equation $\V(f) \subseteq k^4$ is singular along $y=z=t=0$ which yields that $(y,z,t) = \sqrt{\Jac(f)} \subseteq I$ and that $I_1 = \langle y, z, t \rangle$.   Either the span of the first row or the first column must be equal to $I_1$; otherwise, as $y=z=0$ is the unique line in the cubic surface, both spans would be equal to $\langle y,z \rangle$ contradicting that $I_1 = \langle y, z, t \rangle$.  Therefore, after replacing $L$  by $P^{-1} L P$ or $P^{-1} L^{\intercal} P$, we can assume that $Z_{21} = z$, $Z_{31}=y$ and $Z_{41}=t$.  As the matrix expressing $Z_{12}, Z_{13}, Z_{14}$ in terms of $z,y,t$ is necessarily anti-symmetric, we can write $Z_{12} = \alpha t + \beta y$, $Z_{13} = -\beta z + \gamma t$, and $Z_{14} = -\gamma y -\alpha z$.  To summarize, we have
$$\begin{aligned} f &= \det 
\begin{pmatrix} 
	0 		&  \alpha t + \beta y & -\beta z + \gamma t 		& -\gamma y -\alpha z \\
	z	 	& 1+Z_{22} & Z_{23} 		& Z_{24} \\
	y		& Z_{32} & 1 + Z_{33}	& Z_{34} \\
	t		& Z_{42} & Z_{43}		& 1 + Z_{44} 
\end{pmatrix} .
\end{aligned}
$$
Comparing the coefficients in the above expression 
of the 6 monomials of degree $3$ whose $x$-exponent is $1$, 
one obtains six equations that the coefficients $\mathrm{X}_{ij}$ of $x$ in $Z_{ij}$ must satisfy:
$$\begin{array} {l r l}
xy^2 \co &\beta \rX_{23} - \gamma \rX_{43}  = 1 \\
xz^2 \co &- \beta \rX_{32} - \alpha \rX_{42}  = 0 \\
xt^2 \co &\alpha \rX_{24} + \gamma \rX_{34}  = 0
\end{array}
\qquad \quad
\begin{array} {l r  l}
xyz \co &\beta (\rX_{22} - \rX_{33}) - \gamma \rX_{42} - \alpha \rX_{43}   = 0 \\
xyt \co &\gamma (\rX_{33} - \rX_{44}) + \alpha \rX_{23} + \beta \rX_{24} = 0 \\
xzt \co & \alpha(\rX_{22} - \rX_{44}) + \gamma \rX_{32} - \beta \rX_{34} = 0
\end{array}$$
  One can check that these equations are inconsistent unless $\alpha = 0$ and $ \gamma \neq 0$.  In this latter case, the top row of $L$ is 0 if $y = -\beta z + \gamma t = 0$ which in turn implies that $f$ vanishes on this subspace, a contradiction.  We have therefore established that  $\dc(f) > 4$. On the other hand, one can check that
$$f = \det \left(\begin{array}{rrrrr}
- y & z & 0 & 0 & 0 \\
0 & 0 & z & t & x \\
z & 0 & 1 & 0 & 0 \\
0 & t & 0 & 1 & 0 \\
0 & y & 0 & 0 & 1
\end{array}\right)$$
which implies that $\dc(f) = 5$.
\end{proof}

\bibliography{refs}

\providecommand{\bysame}{\leavevmode\hbox to3em{\hrulefill}\thinspace}
\providecommand{\MR}{\relax\ifhmode\unskip\space\fi MR }
\providecommand{\MRhref}[2]{%
  \href{http://www.ams.org/mathscinet-getitem?mr=#1}{#2}
}
\providecommand{\href}[2]{#2}
\begin{thebibliography}{Val79b}

\bibitem[Bea00]{beauville}
Arnaud Beauville, \emph{Determinantal hypersurfaces}, Michigan Math. J.
  \textbf{48} (2000), 39--64, Dedicated to William Fulton on the occasion of
  his 60th birthday.

\bibitem[BL98]{brundu-logar}
Michela Brundu and Alessandro Logar, \emph{Parametrization of the orbits of
  cubic surfaces}, Transform. Groups \textbf{3} (1998), no.~3, 209--239.

\bibitem[CCL08]{ccl}
Jin-Yi Cai, Xi~Chen, and Dong Li, \emph{A quadratic lower bound for the
  permanent and determinant problem over any characteristic {$\ne2$}},
  S{TOC}'08, ACM, New York, 2008, pp.~491--497.

\bibitem[Cre68]{cremona}
Luigi Cremona, \emph{M\'emoire de g\'eom\'etrie pure sur les surfaces du
  troisi\`eeme ordre}, J. Reine Angew. Math. \textbf{68} (1868), 1--133.

\bibitem[Dic21]{dickson}
Leonard~Eugene Dickson, \emph{Determination of all general homogeneous
  polynomials expressible as determinants with linear elements}, Trans. Amer.
  Math. Soc. \textbf{22} (1921), no.~2, 167--179.

\bibitem[Gra55]{grassman}
Hermann Grassmann, \emph{Die stereometrischen gleichungen dritten grades, und
  die dadurch erzeugten oberfl\"achen}, J. Reine Angew. Math. \textbf{49}
  (1855), 49--65.

\bibitem[Gre11]{grenet}
Bruno Grenet, \emph{{An Upper Bound for the Permanent versus Determinant
  Problem}}, to appear in Theory of Computing (2011).

\bibitem[Har95]{harris}
Joe Harris, \emph{Algebraic geometry}, Graduate Texts in Mathematics, vol. 133,
  Springer-Verlag, New York, 1995, A first course, Corrected reprint of the
  1992 original.

\bibitem[HI15]{hi}
Jesko H\"uttenhain and Christian Ikenmeyer, \emph{{Binary Determinantal
  Complexity}}, {\tt arXiv:1410.8202 [cs.CC]}, 2015.

\bibitem[MR04]{mignon-ressayre}
Thierry Mignon and Nicolas Ressayre, \emph{A quadratic bound for the
  determinant and permanent problem}, Int. Math. Res. Not. (2004), no.~79,
  4241--4253.

\bibitem[Sch63]{schroter}
Heinrich Schr\"oter, \emph{Nachweis der 27 geraden auf der allgemeinen
  oberfl\"ache dritter ordnung}, J. Reine Angew. Math. \textbf{62} (1863),
  265--280.

\bibitem[Val79a]{valiant1}
Leslie~G. Valiant, \emph{Completeness classes in algebra}, Conference {R}ecord
  of the {E}leventh {A}nnual {ACM} {S}ymposium on {T}heory of {C}omputing
  ({A}tlanta, {G}a.), 1979, pp.~249--261.

\bibitem[Val79b]{valiant2}
\bysame, \emph{The complexity of computing the permanent}, Theoret. Comput.
  Sci. \textbf{8} (1979), no.~2, 189--201.

\bibitem[vzG87]{von-zur-gathen}
Joachim von~zur Gathen, \emph{Permanent and determinant}, Linear Algebra Appl.
  \textbf{96} (1987), 87--100.

\end{thebibliography}
\bibliographystyle{amsalpha}
\end{document}